\newtheorem{theorem}{Theorem}
\newtheorem{claim}[theorem]{Claim}
\newtheorem{example_hidden}[theorem]{Example}
\newcommand{\cM}{{\cal M}}
\newcommand{\eps}{\varepsilon}
\newcommand{\Sec}[1]{\hyperref[sec:#1]{\S\ref*{sec:#1}}} %section
\newcommand{\Eqn}[1]{\hyperref[eq:#1]{(\ref*{eq:#1})}} %equation
\newcommand{\Fig}[1]{\hyperref[fig:#1]{Fig.\,\ref*{fig:#1}}} %figure
\newcommand{\Tab}[1]{\hyperref[tab:#1]{Tab.\,\ref*{tab:#1}}} %table
\newcommand{\Thm}[1]{\hyperref[thm:#1]{Thm.\,\ref*{thm:#1}}} %theorem
\newcommand{\Lem}[1]{\hyperref[lem:#1]{Lem.\,\ref*{lem:#1}}} %lemma
\newcommand{\Prop}[1]{\hyperref[prop:#1]{Prop.~\ref*{prop:#1}}} %property
\newcommand{\Cor}[1]{\hyperref[cor:#1]{Cor.~\ref*{cor:#1}}} %corollary
\newcommand{\Def}[1]{\hyperref[def:#1]{Defn.~\ref*{def:#1}}} %definition
\newcommand{\Alg}[1]{\hyperref[alg:#1]{Alg.~\ref*{alg:#1}}} %algorithm
\newcommand{\Ex}[1]{\hyperref[ex:#1]{Ex.~\ref*{ex:#1}}} %example
\newcommand{\Clm}[1]{\hyperref[clm:#1]{Claim~\ref*{clm:#1}}} %example
\newcommand{\M}[1]{{\bm{\mathbf{\MakeUppercase{#1}}}}} % matrix
\newcommand{\bv}  {{\mathbf v}}
\newcommand{\bp}  {{\mathbf p}}
\newcommand{\be}  {{\mathbf e}}
\newcommand{\sta} {\boldsymbol{\pi}}
\begin{document}

%%----------------------------------------------------------------------
%% Header Info
%%----------------------------------------------------------------------
\title{A stopping criterion for Markov chains when generating independent random graphs}
\author{J. Ray, A. Pinar and C. Seshadhri}
\affiliation{Sandia National Laboratories, Livermore, CA, USA}

\date{\today}

% Double-space the manuscript.
%\baselineskip24pt

\begin{abstract}

Markov chains are  convenient means of generating realizations of
networks with a given (joint or otherwise) degree distribution, since
they simply require a procedure for rewiring edges. The major
challenge is to find the right number of steps to run such a chain, so
that we generate truly independent samples. Theoretical bounds for
mixing times of these Markov chains are too large to be practically
useful. Practitioners have no useful guide for choosing the length, and
tend to pick numbers fairly arbitrarily.  We give a principled
mathematical argument showing that it suffices for the length to be
proportional to the number of desired number of edges. We also
prescribe a method for choosing this proportionality constant.  We run
a series of experiments showing that the distributions of common graph
properties converge in this time, providing empirical evidence for our
claims.

%  Markov chains can be used to generate realizations of networks,
%  since they require little more than an edge-rewiring procedure. If
%  rewiring of edges can be performed while preserving certain
%  graphical properties e.g., the degree distribution, a Markov chain
%  can be used to generate realizations of graphs with specified
%  graphical properties. However, successive graphs generated by the
%  Markov chain are highly correlated and cannot be used when
%  independent draws from the distribution of graphs are
%  desired. Consequently the chain is run for $N$ steps before a graph
%  is deemed an independent realization.  In this paper, we present two
%  methods by which to estimate $N$. They are both based on the binary
%  ``time-series'' denoting the existence/non-existence of edges in the
%  ensemble of labeled graphs generated by the Markov chain. The two
%  methods differ in their underlying assumptions. We test the methods
%  under two scenarios -- first, when the Markov chain generates graphs
%  with a prescribed degree distribution and the secondly, when the
%  joint degree distribution is preserved by the rewiring procedure. In
%  both cases we find $N \propto |E|$, where $|E|$ is the number of
%  edges in the graph. We test the methods on real graphs with ${\rm
%  O(10^3)} - {\rm O(10^4)})$ nodes, and compare their performance.

  \keywords{graph generation, Markov chain Monte Carlo, independent samples}
  \pacs{89.75.Hc,05.10.Gg}
\end{abstract}

\maketitle

%% ----------------------------------------------------------------------
%% INTRODUCTION & RELATED WORK
%% ----------------------------------------------------------------------
\section{Introduction}
\label{sec:intro}
Graphs are a common topological representation across a variety of
scientific fields. They are used when relations between a large number
of entities have to be specified in a succinct manner. Chemical
reactions, molecules, social networks (both physical and online) and
the electric grid are some common examples. In many cases, we may have
partial information about the graph, requiring generation of many
graphical realizations consistent with the (partial) characterization.
Such situations arise in case of large online social networks (of
which only a small part can be sampled tractably) or where the data
simply cannot be collected e.g., the web of human sexual relations
(which only allows the estimation of the degree distribution). In
other cases, privacy concerns can prevent the distribution of a graph
for experimentation and or study, e.g., networks of email
communications, critical infrastructure nets, etc. This gives rise to
the problem of constructing ``sanitized'' proxies, that preserve only
some properties of the original graph. Thus, being able to generate
independent graphs conditioned on an incomplete set of graphical
measurements is essential for many applications.

Many graph generation methods aim to preserve the salient features of
graphs~\cite{12sk3a, 12pf4a,55hv1a,62hs1a,96wp2a, kronecker,
Aiello00,Chung2, 12gk3a}.  For statistical analysis, we need
algorithms that can generate uniformly random instances from the space
of graphs with a specified feature.  There has been significant work
on random graph generation with a given degree distribution (DD),
which specifies the number of vertices with a given degree.
In~\cite{ktv}, the problem of generating a graph with a given degree
distribution was reduced to a prefect matching problem, which can be
used to generate random instances by employing results in sampling
perfect matchings~\cite{JeSiVi04,Br86}.  Alternatively, sequential
sampling methods were investigated in~\cite{BlDi11,BaKiSa10}.  These
methods can be compute intensive, and in practice,
%
%Graphs can be generated, \emph{ab initio}, using models that preserve
%certain prescribed graphical properties. The bulk of these methods
%preserve the DD e.g.,~\cite{55hv1a,62hs1a,96wp2a, kronecker,
%Aiello00,Chung2}. A few recent methods preserve the clustering
%coefficient in addition~\cite{12pf4a} or the community
%structure~\cite{12sk3a}. In~\cite{StPi12}, the authors provide a
%procedure for constructing graphs that preserve a prescribed JDD.
%
Markov chains (MC) are widely used due to their simplicity and
flexibility.  The MC is started using a graph that honors the
specified graphical characteristics, and uses a procedure that can
``rewire'' a graph, to create ensembles of graphs with the same degree
distribution.  Taylor~\cite{81tr1a} showed that edge-swaps could
modify a graph while preserving its DD.
% leading to sequential methods that generate graphical samples. 
Kannan et al~\cite{ktv} analyzed the mixing time of such a MC, whereas
Gkantsidis~\cite{GkantsidisMMZ03} devised a MC scheme that avoids self
loops. In~\cite{StPi12}, Stanton and Pinar used an MC to generate an
ensemble of graphs using a rewiring scheme that preserved the joint
degree distribution (JDD), which specifies the number of edges between
vertices of specified degrees.  Stanton and Pinar also empirically
analyzed the mixing of the MC using the binary ``time-series'' traced
out by the appearance/disappearance of edges between two labeled
nodes, as the MC executed its random walk in the space of graphs. They
showed that the autocorrelation of the time-series decayed to zero, a
necessary condition for the MC to converge to its stationary
distribution~\cite{Sokal}.
%In~\cite{12gk3a} the authors devised a MC procedure to construct graphs with a prescribed JDD and clustering coefficient.

%~\cite{ErMiSo12,ErKiMi12}\cite{StPi12}. \Jai{Ali, can you cite a paper where a
%Markov chain is used in conjunction with degree distribution
%preservation?}. 

%A second method for generating graphs is by ``rewiring'' or
%edge-swapping. Taylor~\cite{81tr1a} showed that edge-swaps could
%modify a graph while preserving its DD, leading to sequential methods
%that generate graphical samples~\cite{SinclairJ89,
%JerrumS90,bayatikimsaberi}. An alternative method is to use a MC in
%conjunction with a rewiring/edge-swapping scheme that preserves
%DD~\cite{JerrumSV04}. 

Graph rewiring schemes that preserve DD or JDD are simple (and fast)
and MC chains driving them are easy to construct. However, successive
graphs generated by the MC are only slightly different and the MC has
to proceed for a large number of steps $N$ before the initial graph is
``forgotten.'' The mixing time estimates in~\cite{ktv} take the form
of upper bounds and of $O(|E|^6)$, where $|E|$ is the number of edges
in a graph. Even for small graphs with 1000 edges, the bound is
intractable. In practice, $N$ is chosen arbitrarily. Failure to mix
completely leads to the generation of correlated samples and any
results derived from them are erroneous.  While the method
in~\cite{StPi12} demonstrated the use of autocorrelation to establish
mixing, it is an \emph{a posteriori} test which provides no \emph{a
priori} guidance regarding the length of the MC chain. Further, the
method requires a small, user-specified threshold, below which the
time-series autocorrelation is deemed zero.

In this paper, we construct analytical models that estimate $N$. These
models track the evolution of the binary ``time-series'' formed by the
edges of labeled graphs generated by the MC.  We test the model under
conditions where the DD or the JDD is held constant. In
Sec.~\ref{sec:models}, the model is used to derive an expression for
the mixing time based on when the binary time-series begins to
resemble independent draws from a distribution. The models predict
that the mixing time is proportional to $|E|$, the number of edges in
the graph. The model holds true for a ``representative'' edge and
exploits the constancy of DD (or JDD) in arriving at its
prediction. The model is tested with real graphs in
Sec.~\ref{sec:tests}. In Sec.~\ref{sec:verif}, we develop a
data-driven method, that assumes neither a constant DD (or JDD) nor a
representative edge, to investigate the independence of the edge
time-series. We use this test to verify the assumptions made when
developing mixing time expressions in Sec.~\ref{sec:models}. We
determine the fraction of the edges for which the model prediction of
$N$ is an underestimate, and the consequences of the lack of
stationarity on the ensemble of graphs sampled in this manner.

%``Rewiring'' is defined as the process of generating a new graph from an old one by swapping a few edges. In order that the graphs so generated are independent of the graph used to start the Markov chain (MC), the MC has to be run till it completely mixes. Existing theoretical bounds for the mixing time $N$ are intractably large, and in practice, $N$ is picked quite arbitrarily. In this paper, we present a principled procedure for estimating $N$. The result holds for cases where either the degree distribution (DD) or the joint degree distribution (JDD) is preserved, as the MC generates its ensemble of independent graphs.

%% ----------------------------------------------------------------------
%% MODEL DEVELOPMENT
%% ----------------------------------------------------------------------
%
%%
\section{Theoretical analysis}
\label{sec:models}

The goal of this paper is to provide a mathematically principled
argument for running a MC ${\rm O(|E|)}$ steps to generate independent
graphs with $|E|$ edges. The constant hidden in the big-Oh depends on
the desired accuracy. The graphs so generated may have a prescribed DD
or a JDD; we find that a MC on graphs with a prescribed DD mixes
slightly easier than those where the JDD is preserved. Our empirical results
show that ${\rm 5|E|}$ -- ${\rm 30|E|}$ steps are sufficient for
mixing these MCs. We provide a mathematical justification
for this observation.

Consider a MC on the space of graphs and two labeled vertices $u$ and
$v$. Under certain circumstances, the existence of an edge $(u,v)$ can
also be described by a Markov chain. Thus we model the behavior of the
MC on the space of graphs in terms of a set of coupled 2-state Markov
chains, each representing edge behavior. We develop the transition
matrix for the (edge) MC, which is exact for DD-preserving
experiments, but heuristically derived for MCs with prescribed
JDDs. We prove an ${\rm O(|E|)}$ mixing time for these 2-state Markov
chains.  This ensures that that after ${\rm O(|E|)}$ MC steps, each
edge behaves as if it comes out of independent draws. While this is
not sufficient for complete mixing, it is a \emph{necessary}
condition.

These mathematical theorems will be empirically validated in the next section,
but we give a quick preview of the method.
Starting from a real
graph, we run a MC (driving a DD- or JDD-preserving rewiring scheme)
for $\ell$ steps, to create a new, nominally independent graph. This
is repeated $M$ times to create an ensemble of graphs. We plot the
distribution of some common graphical quantity (e.g., the number of
triangles in each graph) constructed from the ensemble. This is
repeated for different values of $\ell$ till the plots converge. We
find that $\ell \sim {\rm 10 |E|}$. When $\ell \sim {\rm |E|}$, the MC
is far from convergence.

Below, we review the rewiring schemes used in the paper. This is
followed by derivations of the models of mixing under DD and JDD
preservation.

\subsection{A review of rewiring schemes}
\label{sec:rewire}

Consider an undirected graph $G=(V,E)$, where $|V| = n$, the number of
vertices in the graph, and $|E| = m$, the number of edges.  The degree
distribution of the graph is given by the vector $\mathbf{f}$, where
$\mathbf{f}(d)$ is the number of vertices of degree $d$.  The \emph{joint
degree distribution} is an $n \times n$ matrix $\mathbf{J}$, where
$\mathbf{J}(i,j)$ is the number of edges incident between vertices of degree
$i$ and degree $j$. We use $d_v$ to denote the degree of vertex $v$.

%\Jai{Ali: Can you describe the algorithm for generating graphs with a prescribed DD}.
 
\begin{figure}
  \includegraphics[scale=0.5]{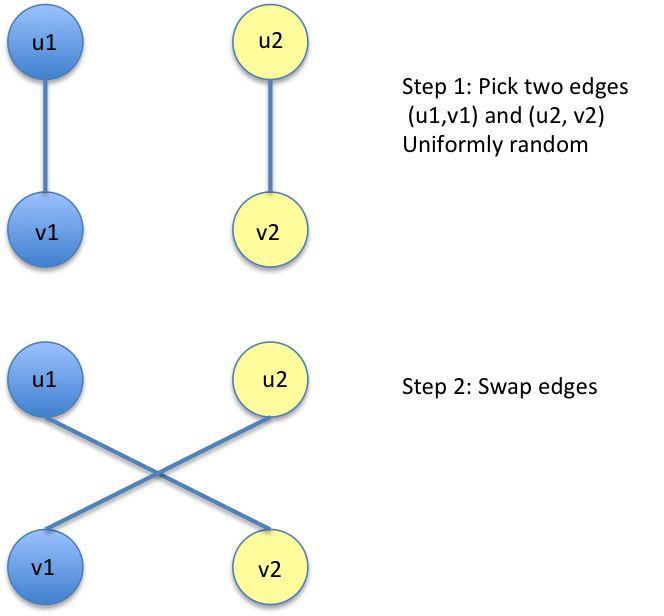}
  \caption{The DD-preserving swapping operation described in
  Sec.~\ref{sec:rewire}.}
  \label{fig:ddswap}
\end{figure}

In this paper, we use the greedy, JDD-preserving rewiring procedure by
Stanton and Pinar in~\cite{StPi12}. This is analogous to the degree
distribution preserving chain~\cite{ktv,GkantsidisMMZ03}, illustrated in Fig~\ref{fig:ddswap}.  
The JDD-preserving rewiring
procedure is shown in Fig~\ref{fig:jddswap}. The procedure does not guard
against parallel edges or self-loops, but the MC itself can be
slightly modified to ensure that these artifacts do not
occur~\cite{StPi12,GkantsidisMMZ03}. We also maintain lists of nodes
and edges indexed by their degree, so that for a given degree $d$, we
can locate a uniform random edge that is incident on a degree-$d$
node. The steps in the rewiring scheme are:

\begin{asparaitem}
  \item Pick an \emph{endpoint} uniformly at random. This is done by
  picking a random edge and choosing one of its endpoints with
  equal probability. Let this vertex be $u_1$. We then choose
  a uniform random neighbor $v$ of $u_1$, to get edge
  $(u_1,v)$. 

  \item Pick uniformly at random another vertex $u_2$ such that $d_{u_1} =
  d_{u_2}$, and choose a uniform random neighbor $w$ incident to
  $u_2$. This gives the second edge.

  \item Swap the edges $(u_1,v)$ and $(u_2,w)$ to create $(u_1,w)$ and
  $(u_2,v)$. 
\end{asparaitem}
Details of the rewiring scheme and discussions of the ergodicity,
mixing of a MC driving such a rewiring scheme can be found
in~\cite{StPi12}. A pictorial description is in Fig.~\ref{fig:jddswap}.

\begin{figure}
  \includegraphics[scale=0.5]{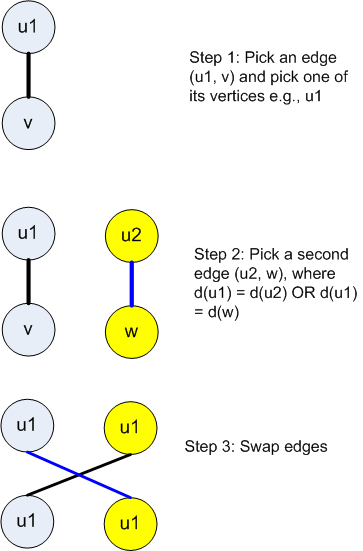}
  \caption{The JDD-preserving swapping operation described in
  Sec.~\ref{sec:rewire}.}
  \label{fig:jddswap}
\end{figure}

\subsection{Modeling mixing when degree distribution is preserved}
\label{sec:DD}

Consider a MC $\cM$ that generates graphs with a prescribed degree
distribution. Consider two labeled vertices $u$ and $v$, with degrees
$d_u$ and $d_v$. We will show that the appearance/disappearance of $(u,v)$ in  $\cM$ can also be
described by a Markov chain.

\begin{claim}
\label{clm:T}
  The transition matrix of the Markov chain for the edge $(u,v)$ is given by
  \begin{eqnarray}
    \M{T}(d_u, d_v)  & = &  \left( \begin{array}{cc}           
                                  1 - \alpha(d_u,d_v) &   \alpha(d_u,d_v) \\
                                  \beta (d_u,d_v)     &  1 - \beta(d_u,d_v) 
                               \end{array} \right)  \nonumber 
 %                & = &  \left( \begin{array}{cc}           
  %                                 1 - \frac{d_u d_v}{m^2}             &   \frac{d_u d_v}{m^2} \\
   %                                1 - \left(1 - \frac{1}{m}\right)^2  &  \left(1 - \frac{1}{m}\right)^2
    %                           \end{array} \right),
  \label{eqn:T}
  \end{eqnarray}
  where  $\alpha(d_u,d_v)=\frac{d_ud_v}{2m^2}$,   $\beta (d_u,d_v)=1-(1-1/m)^2$, and   the first index in $T$ is state $0$ (no edge).
\end{claim}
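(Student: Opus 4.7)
The plan is to break the claim into two parts: first, show that the presence/absence of $(u,v)$ indeed evolves as a two-state Markov chain; then, directly compute the two off-diagonal entries $\alpha$ and $\beta$ from the swap rule reviewed in Sec.~\ref{sec:rewire}.

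For the Markov property, I would use the fact that $\mathcal{M}$ preserves the degree distribution, so $d_u, d_v$ (and $m$) are chain invariants. Conditioned on the current graph, the next graph is obtained by one swap whose outcome for the edge $(u,v)$ depends only on which two edges/endpoints are sampled. I would argue that the probability that $(u,v)$ is present in the next graph, conditioned on the whole history, depends only on (i) whether $(u,v) \in E$ now and (ii) the invariants $d_u, d_v, m$; the detailed structure of the rest of the graph integrates out because the selection rule samples endpoints uniformly over $V$ (weighted by degree) and neighbors uniformly given the endpoint. This gives the two-state Markov property and reduces the problem to computing the two off-diagonal probabilities.

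For $\beta$, I would condition on $(u,v) \in E$ and note that $(u,v)$ can only disappear if it is one of the two edges selected for the swap. Since each of the two edge selections in the rewiring scheme picks a uniformly random edge of $E$, the probability that $(u,v)$ is missed by both selections is $(1-1/m)^2$, so by complementation $\beta = 1-(1-1/m)^2$. To justify that selection of $(u,v)$ always destroys it, I would check that the other possible output of the swap involving $(u,v)$ is either $(u,v)$ itself only in degenerate configurations that would require self-loops or multi-edges --- which the chain rejects by the modification mentioned in Sec.~\ref{sec:rewire} --- so selection and destruction coincide.

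For $\alpha$, I would condition on $(u,v)\notin E$ and enumerate the ways the swap can produce $(u,v)$: one of the two selections must land its ``swap endpoint'' on $u$ while the other lands on $v$. Under the rewiring rule, a single selection lands on a specific vertex $x$ with probability $d_x/(2m)$ (pick a random edge, then one of its two endpoints), and the two selections are independent conditional on $(d_u, d_v, m)$. Enumerating over which of the two selections hits $u$ versus $v$, and using the simple-graph condition to verify the cases are disjoint, should collect to $\alpha = d_u d_v / (2m^2)$. The \emph{main obstacle} here is exactly this bookkeeping: distinguishing ordered from unordered pairs of selected edges, identifying which ``role'' each endpoint plays in the swap, and ensuring that symmetric cases are neither double-counted nor dropped so that the prefactor comes out to $1/(2m^2)$ rather than $1/m^2$. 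A clean way to organize the argument is to index sub-cases by which of the two new edges is $(u,v)$ and which selection supplied which endpoint, then argue disjointness from $(u,v)\notin E$ plus the absence of self-loops.
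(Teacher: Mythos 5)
Your proposal is correct and follows essentially the same route as the paper: it computes $\alpha$ by noting that each edge-plus-endpoint selection lands on a given vertex $x$ with probability $d_x/2m$, multiplies the two selections and doubles for the symmetric case to get $d_ud_v/2m^2$, and computes $\beta$ as the complement of $(u,v)$ being missed by both uniform edge selections, with the Markov property following because these probabilities depend only on $d_u$, $d_v$, and $m$, all of which are invariants of the chain. Your extra care about degenerate configurations and double-counting is reasonable bookkeeping that the paper's proof passes over silently, but it does not change the argument.
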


%\Ali{I think $\frac{d_ud_v}{2m^2}$.  Also  if we assume $e\not =e^\prime$, doesn't $\beta=1/m^2$? }
\begin{proof}
  Assume that the edge $(u,v)$ does not initially exist. We analyze
  the probability of it appearing after one step. Let $e$ and
  $e^{\prime}$ be the two edges chosen by the rewiring algorithm. $(u,
  v)$ will appear if $u$ is the first endpoints of $e$ and $v$
  is the last endpoint of $e'$ (or vice versa). This leads to four
  possible cases. Consider the case when both the vertices are the
  first endpoints of $e$ and $e^{\prime}$. The probability of this
  event is $d_u/2m \times d_v/2m = d_u d_v / 4m^2$. The other case
  is symmetric and so $\alpha(d_u, d_v) = d_u d_v / 2m^2$.

  Suppose that $(u, v)$ does exist but is removed during the Markov
  transition. This occurs if $(u, v)$ is either $e$ or
  $e^{\prime}$. The probability of this event is $\beta_(d_u, d_v) = 1 - (1-1/m)^2$. Note
  that the values of $\alpha$ and $\beta$ so derived are completely
  independent of the structure of the rest of the graph, and so the
  transition probabilities in Eq.~\ref{eqn:T} satisfy the Markov property.
\end{proof}

\subsection{Modeling mixing when joint degree distribution is preserved}
\label{sec:JDD}

We consider a pair of labeled vertices $(u, v)$ and model the
appearance/disappearance of the edge during a Markov transition.
For convenience, we will assume that $\mathbf{f}(d_u)$ and $\mathbf{f}(d_v)$
are both strictly greater than $1$. The proofs can be extended to these
cases as well. (Observe that when $\mathbf{f}(d_u) = \mathbf{f}(d_v)$, the
edge $(u,v)$ is always present because the JDD is preserved.)

\begin{claim}
\label{clm:beta}
 Assume edge $(u, v)$ is present and is removed using the transition. The probability of  this event is
 \begin{equation}
   \beta(d_u, d_v) = \frac{1}{m} + \frac{\mathbf{f}(d_u) -  1}{2m\mathbf{f}(d_u)} + \frac{\mathbf{f}(d_v) -  1}{2m\mathbf{f}(d_v)}
 \end{equation}
\end{claim}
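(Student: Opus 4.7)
The plan is to mirror the proof of Claim~\ref{clm:T} (Claim 1), decomposing the event that $(u,v)$ is removed by a single rewiring step into cases according to which of the two chosen edges $e=(u_1,v_1)$ and $e'=(u_2,w)$ equals the pair $\{u,v\}$, and summing the probabilities of those cases. I would distinguish three contributions: (A) $\{u_1,v_1\}=\{u,v\}$; (B) $u_2=u$, $w=v$, with $u_1\ne u$ (the condition $u_1\ne u$ is needed because, when $u_2=u_1$, the proposed swap is a no-op and removes nothing); and (C) the symmetric counterpart of (B) with the roles of $u$ and $v$ interchanged.

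First, I would dispose of Case (A). Because $u_1$ is obtained by picking a uniform random edge and then a uniform random endpoint, each vertex $x$ becomes $u_1$ with probability $d_x/(2m)$, after which $v_1$ is uniform among the $d_{u_1}$ neighbors of $u_1$. Summing the two orientations $(u_1,v_1)=(u,v)$ and $(u_1,v_1)=(v,u)$ gives $2\cdot(d_u/(2m))(1/d_u)=1/m$, the first term of $\beta(d_u,d_v)$. For Case (B), I would sum over the $\mathbf{f}(d_u)-1$ admissible choices $x\in V_{d_u}\setminus\{u\}$ for $u_1$; each contributes
\[
\Pr(u_1=x)\,\Pr(u_2=u\mid u_1=x)\,\Pr(w=v\mid u_2=u)=\frac{d_u}{2m}\cdot\frac{1}{\mathbf{f}(d_u)}\cdot\frac{1}{d_u},
\]
and the sum telescopes to $(\mathbf{f}(d_u)-1)/(2m\mathbf{f}(d_u))$, the second term. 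Case (C) is completely symmetric in $u$ and $v$ and produces the third term. Adding the three contributions recovers $\beta(d_u,d_v)$ as stated.

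The main subtlety, and the reason this claim is presented as heuristic rather than exact (in contrast to Claim~\ref{clm:T}), is that I would be implicitly (i) treating the three cases as pairwise disjoint and (ii) ignoring the possibility that the swap happens to re-create $(u,v)$ as one of the new edges $(u_1,w)$ or $(u_2,v_1)$. For $\mathbf{f}(d_u),\mathbf{f}(d_v)>1$ these pathological configurations require specific graph-structure coincidences (for instance, Case~A and Case~B can overlap only when $d_u=d_v$ and $v_1$ is forced to equal $u$) and contribute only subleading corrections relative to the $O(1/m)$ leading terms. In the write-up I would argue that these corrections can be absorbed into the stated expression, which is what makes the result an edge-level analogue of Claim~\ref{clm:T} for the JDD-preserving chain rather than a strict identity.
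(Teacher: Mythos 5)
Your proof follows the same decomposition as the paper's: the event $e=(u,v)$ contributes $1/m$, and the two ways $e'=(u,v)$ can arise (first endpoint of degree $d_u$ or $d_v$, excluding $u$ resp.\ $v$) contribute the remaining two terms, with identical intermediate probabilities. The only difference is one of framing: the paper presents this $\beta$ as \emph{exact} and independent of the rest of the graph (it is $\alpha$ in Claim~\ref{clm:alpha} that is the heuristic approximation), silently ignoring the degenerate overlaps and no-op swaps that you correctly flag as subleading corrections.
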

%\Ali{I think we need $(m-1)/m$ before the second term}
\begin{proof}
  We follow the JDD-preserving swapping procedure described in
  Sec.~\ref{sec:rewire}. Let $e$ and $e^{\prime}$ be the two edges
  chosen for swapping. If   $(u, v)$ is $e$, it will definitely be
  swapped. The probability of the chosen edge being $(u, v)$ is $1/m$.

  The other alternative is that  $(u, v)$ is $e^{\prime}$. 
  This can happen only if the first endpoint chosen has degree $d_u$ (or $d_v$)
  and then $(u,v)$ is chosen. Consider the first case.
%  
%  Let the
%  randomly chosen endpoint of $e$ have degree $d_u$ (but is not $u$),
%  then one could choose $(u, v)$ to be $e^{\prime}$. 
  The total number
  of endpoints on degree $d_u$  vertices (excluding $u$) is
  $(\mathbf{f}(d_u)-1)d_u$. The probability that $e$ is chosen
  is the probability that $u$ is chosen as the second endpoint, and
  then $(u,v)$ is chosen. This is equal to $1/(\mathbf{f}(d_u) \times d_u)$.
  The total probability is 
  \[  \frac{(\mathbf{f}(d_u) - 1)d_u}{2m} \times \frac{1}{\mathbf{f}(d_u)d_u} = \frac{(\mathbf{f}(d_u) -  1)d_u}{2m\mathbf{f}(d_u)d_u}  = \frac{\mathbf{f}(d_u) -  1}{2m\mathbf{f}(d_u)}\]

  Symmetrically, the randomly chosen endpoint could have had degree
  $d_v$. So the total probability of choosing $e^{\prime} = (u, v)$ is
%  \[ \frac{\mathbf{f}(d_u)d_u + \mathbf{f}(d_v)d_v - d_u - d_v}{2m\mathbf{f}(d_v)d_v}, \]
$$ \frac{\mathbf{f}(d_u) -  1}{2m\mathbf{f}(d_u)} + \frac{\mathbf{f}(d_v) -  1}{2m\mathbf{f}(d_v)} $$    
  and the probability of $(u, v)$ being removed is
$$
     \beta(d_u, d_v) = \frac{1}{m} + \frac{\mathbf{f}(d_u) -  1}{2m\mathbf{f}(d_u)} + \frac{\mathbf{f}(d_v) -  1}{2m\mathbf{f}(d_v)}.
$$
\end{proof}

This expression has certain ramifications. Since the JDD is preserved
(and by implication, $\mathbf{f}(d)$), the probability of edge removal does not
vary as the Markov chain proceeds. This satisfies the Markov property
and $\beta$ depends only on $u$ and $v$ i.e., it is independent of the
rest of the graph.

Consider the probability $\alpha$ of $(u, v)$ appearing. This is not
Markovian, as it depends on the structure of the graph. However, this
dependence is weak, and we can obtain a Markovian
approximation using a simple heuristic.

Consider the number of edges $\M{J}(d, d_v)$ between vertices of
degree $d_v$ and an arbitrary $d$. The \emph{expected} number of such
edges incident on a degree $d_v$ vertex is $\M{J}(d,d_v)/\mathbf{f}(d_v)$. The
number of edges between a vertex $v$ and degree $d$ vertices depends
on the graphical structure; however, we will approximate it with the
expected value $\M{J}(d,d_v)/\mathbf{f}(d_v)$.

\begin{claim}
\label{clm:alpha}
  If $(u, v)$ is not present at a stage in the Markov chain, the
  probability $\alpha$ that it appears in the next step is given by
  \begin{equation}
    \alpha(d_u, d_v) = \frac{2\M{J}(d_u,d_v)}{m\mathbf{f}(d_u)\mathbf{f}(d_v)}
  \label{eqn:alpha}
  \end{equation}
\end{claim}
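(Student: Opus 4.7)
The plan is to decompose the event ``edge $(u,v)$ appears after one step'' into mutually exclusive sub-events indexed by the way the swap produces it. If the rewiring selects edges $e=(u_1,v_1)$ and $e'=(u_2,v_2)$ with $d_{u_1}=d_{u_2}$, the swap creates $(u_1,v_2)$ and $(u_2,v_1)$; the pair $(u,v)$ can match either of these in either orientation, giving four cases indexed by which of $\{u_1,u_2\}$ equals $u$ (and by the relabeling of $e,e'$). I expect all four cases to contribute equally by symmetry, so I would compute one and multiply by four.

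For the representative case $u_1=u$ and $v_2=v$, I would walk through the four stages of the procedure of Sec.~\ref{sec:rewire}. Picking $u$ as the first endpoint of $e$ has probability $d_u/(2m)$; the choice of $v_1$ is unconstrained (it is automatically different from $v$ since $(u,v)\notin E$ by assumption), contributing total probability $1$. Next $u_2$ must be a degree-$d_u$ vertex adjacent to $v$, picked with probability $1/\mathbf{f}(d_u)$, and then $v$ must be the neighbor selected from $u_2$, with probability $1/d_{u_2}=1/d_u$. Summing over the $N_v(d_u)$ eligible choices of $u_2$ gives $N_v(d_u)/(2m\,\mathbf{f}(d_u))$ for this case, where $N_v(d_u)$ denotes the number of degree-$d_u$ neighbors of $v$.

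The only non-mechanical step, flagged in the paragraph preceding the claim, is that $N_v(d_u)$ is a structural quantity not determined by the JDD alone, and keeping it would destroy the Markov property for the edge chain. Replacing it by its average $\M{J}(d_u,d_v)/\mathbf{f}(d_v)$ over all degree-$d_v$ vertices yields $\M{J}(d_u,d_v)/\bigl(2m\,\mathbf{f}(d_u)\,\mathbf{f}(d_v)\bigr)$ for one case. The two remaining cases, in which $v$ plays the role of the first endpoint drawn from a degree-$d_v$ edge, use the symmetric heuristic $N_u(d_v)\approx \M{J}(d_u,d_v)/\mathbf{f}(d_u)$, and the symmetry $\M{J}(d_u,d_v)=\M{J}(d_v,d_u)$ makes the four contributions numerically identical. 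Summing then gives $\alpha(d_u,d_v)=2\M{J}(d_u,d_v)/\bigl(m\,\mathbf{f}(d_u)\,\mathbf{f}(d_v)\bigr)$ as claimed.

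The main obstacle I anticipate is not the arithmetic but cleanly isolating and defending the heuristic: strictly speaking the edge chain is not Markovian, and the approximation $N_v(d_u)\approx \M{J}(d_u,d_v)/\mathbf{f}(d_v)$ should be justified either as the expectation under a uniform random draw from graphs with the given JDD, or as a concentration statement. I would also verify that the boundary cases $\mathbf{f}(d_u)=1$ or $\mathbf{f}(d_v)=1$ excluded by the claim's hypothesis, and the subcase $d_u=d_v$ in which the four sub-events partially coincide under relabeling, do not introduce hidden double-counting in the final factor of four.
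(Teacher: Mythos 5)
Your proposal matches the paper's own argument essentially step for step: the same decomposition into four symmetric swap configurations, the same per-case probability $d_u/(2m)\cdot N_v(d_u)/\mathbf{f}(d_u)\cdot 1/d_u$, the same heuristic replacement of the structural count $N_v(d_u)$ by its JDD-average $\M{J}(d_u,d_v)/\mathbf{f}(d_v)$, and the same final factor of four. Your added caveats about the non-Markovian nature of the approximation and the degenerate cases are sensible but go beyond what the paper itself addresses.
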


\begin{proof} As before, let the two edges chosen by the algorithm
  be $e$ and $e'$, in that order. The edge $(u,v)$ is swapped in 
  when either of the four following situations happens.
  Vertex  $u$ is the first (resp. second) endpoint of $e$ and $v$ is the 
  second (resp. first) endpoint of $e'$. Or, 
  vertex  $u$ is the first (resp. second) endpoint of $e'$ and $v$ is the 
  second (resp. first) endpoint of $e$.
  
  $u$ is the first endpoint of $e$ and $v$ is the second endpoint of $e'$ (or vice versa).
  
  Consider the following sequence of events. First, $u$ is chosen as an endpoint
  (and we get edge $e$). Then, we choose a vertex $u'$ (who is a neighbor
  of $v$) whose degree is the same as $u$.
  Finally, we select edge $e' = (u',v)$. Swapping will lead to edge $(u,v)$.
  The probability of the first event is $d_u/2m$. The probability of the 
  second event depends the number of neighbors of degree $d_u$ incident
  to $v$. Alternately, this is the number of edges connecting $v$
  to degree $d_u$ vertices. Clearly, this depends on the graph structure, but our approximation of this 
  quantity is $\M{J}(d_u,d_v)/\mathbf{f}(d_v)$.
  
  So, the second probability is $\M{J}(d_u,d_v)/(\mathbf{f}(d_v)\mathbf{f}(d_u))$ (because
  there are $\mathbf{f}(d_u)$ neighbors of degree $d_u$, of which $\M{J}(d_u,d_v)/\mathbf{f}(d_v)$
  are neighbors of $v$). The last probability is $1/d_u$. The total probability is
  \[ \frac{d_u}{2m} \times \frac{\M{J}(d_u,d_v)}{\mathbf{f}(d_u)\mathbf{f}(d_v)} 
  \times \frac{1}{d_u}= \frac{\M{J}(d_u,d_v)}{2m\mathbf{f}(d_u)\mathbf{f}(d_v)} \]
  
  Now consider the second case. First, $v'$ (a neighbor of $u$ whose degree is $d_v$) is chosen
  as an endpoint. Then, $u$ is chosen as the neighbor of $v'$, so $e = (u,v')$.
  Then, $v$ is chosen as the first endpoint of the second edge $e'$.
  Swapping will again lead to edge $(u,v)$. The first probability 
  is approximated by $\mathbf{J}(d_u,d_v)/(\mathbf{f}(d_u)\mathbf{f}(d_v))$.
  The second probability is $1/d_v$, and the last is $d_v/2m$.
  The probability comes out to be the same as earlier.
  
  By symmetry, we get the probabilities of the other cases by switching the roles
  of $u$ and $v$, yielding the same expression. 
%  
%  The total probability is
%  \[ \frac{2\M{J}(d_u,d_v)}{m\mathbf{f}(d_u)\mathbf{f}(d_v)} \]
%  
%  
%  
%   Furthermore,
%  the endpoint $u$ must be chosen, and the other end of $e'$ must have
%  degree $d_u$.  The probability that $e$ contains $u$ and $u$ is
%  chosen as an endpoint is exactly $d_u/2m$.  The probability we
%  choose the edge $e'$ that is incident to $v$ depends on the number
%  of neighbors of $v$ whose degree is $d_u$. Clearly, this depends on
%  the graph structure, but our approximation of this quantity is
%  $\M{J}(d,d_v)/\mathbf{f}(d_v)$.  The total probability is
%  \[ \frac{d_u}{2m} \times \frac{\M{J}(d_u,d_v)}{m\mathbf{f}(d_v)} = \frac{d_u \M{J}(d_u,d_v)}{2m^2\mathbf{f}(d_v)} \]
%%  \Ali{I think it should be $\mathbf{f}(d_u)$ instead of $m$ in the denominator}. 
%  The edge $(u,v)$ is also swapped in when the reverse happens (so we choose $v$ as an endpoint,
%  and an edge incident to $u$ with the other endpoint of degree $d_v$). The total
%  probability of swapping in is given by
%  \begin{eqnarray}
%    \alpha(d_u, d_v) & = & \frac{d_u \M{J}(d_u,d_v)}{2m^2\mathbf{f}(d_v)} 
%                   +   \frac{d_v \M{J}(d_u,d_v)}{2m^2\mathbf{f}(d_u)}  \nonumber \\
%           & = &  \frac{\M{J}(d_u,d_v)}{2m^2}
%                          \left(\frac{d_u}{ \mathbf{f}(d_v)} + \frac{d_v}{\mathbf{f}(d_u)}\right)
%\end{eqnarray}
\end{proof}

\subsection{Estimating the mixing time}

Having developed expressions for $\alpha$ and $\beta$ in
Eq.~\ref{eqn:T}, under the assumption of prescribed DD
(Sec.~\ref{sec:DD}) and JDD (Sec.~\ref{sec:JDD}), we use the Markov
transition matrix to estimate the number of steps $N$ to achieve a
stationary distribution, conditional on a tolerance $\epsilon$.

\begin{claim} 
\label{clm:markov} Choose $N$ as follows.
%Set $N = (m/2)\ln(1/\eps)$ for DD-preserving MCs and $N =
%m\ln(1/\eps)$ for JDD-preserving MCs. 
Let the final distribution after
running $\cM_{u,v}$ for $N$ steps be $\bp$. Then $\|\bp - \sta_{u,v}\|
< \eps$.
%\Ali{How about the following for $N$, it may be easier to see to result.}
\[ N=\begin{cases}
(m/2)\ln(1/\eps) & \mathrm{DD\;preserving\;\; MCs} \\
m\ln(1/\eps) &   \mathrm{JDD\; preserving\;\; MCs }
\end{cases}\] 
\end{claim}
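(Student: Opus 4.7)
The plan is to treat each edge's transition matrix $\M{T}$ as a standard two-state Markov chain and exploit its simple spectral structure. Any $2\times 2$ row-stochastic matrix with off-diagonal entries $\alpha,\beta$ has eigenvalues $1$ and $\lambda_2=1-\alpha-\beta$, with unique stationary distribution $\sta=(\beta/(\alpha+\beta),\alpha/(\alpha+\beta))$. Writing an arbitrary initial distribution $\bp_0$ as $\sta+c\V{w}$, where $\V{w}$ is the left $\lambda_2$-eigenvector, gives $\bp_0\M{T}^N=\sta+c\lambda_2^N\V{w}$. Hence the total variation distance satisfies
\[
\|\bp_0\M{T}^N-\sta\|\;\le\;|1-\alpha-\beta|^N\,\|\bp_0-\sta\|\;\le\;(1-\alpha-\beta)^N,
\]
since both are probability vectors on two states, so their TV distance is at most $1$, and $0\le \alpha+\beta\le 1$ in both claims.

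With this, the remaining task is to lower bound $\alpha+\beta$ in each regime and invoke $1-x\le e^{-x}$. First I would observe that $\alpha\ge 0$ always, so it suffices to bound $\beta$ from below. In the DD-preserving case (Claim~\ref{clm:T}), $\beta=1-(1-1/m)^2=2/m-1/m^2$, so $(1-\alpha-\beta)^N\le e^{-\beta N}\le e^{-(2/m)(1-1/(2m))N}$. Requiring this to be at most $\eps$ and absorbing the $1/(2m)$ factor into the constant gives $N\ge (m/2)\ln(1/\eps)$. In the JDD-preserving case (Claim~\ref{clm:beta}), each of the three summands in $\beta$ is nonnegative and the first alone gives $\beta\ge 1/m$; thus $e^{-\beta N}\le e^{-N/m}\le \eps$ whenever $N\ge m\ln(1/\eps)$.

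The one step requiring care is justifying that the spectral contraction bound actually applies to \emph{this} chain, since Claim~\ref{clm:alpha} is only a heuristic approximation for $\alpha$ in the JDD case. I would note that the Markov property used here only requires $\alpha$ and $\beta$ to be constant in time (which holds once DD, or in the JDD case the joint matrix $\M{J}$ together with $\mathbf{f}$, is fixed), not that they equal their exact structural values. The contraction bound is then purely a fact about the $2\times 2$ matrix with the stated entries, so the spectral argument goes through verbatim.

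The main obstacle I expect is simply book-keeping the $O(1/m^2)$ correction terms cleanly enough that the final constants match $m/2$ and $m$ exactly, rather than being buried in an $O(\cdot)$. The most honest statement is probably that the bound holds with $N$ slightly larger than stated by a $(1+O(1/m))$ factor; for the scaling claim $N=O(|E|)$ that the paper is really after, either form is fine.
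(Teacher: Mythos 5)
Your proposal is correct and follows essentially the same route as the paper: diagonalize the $2\times 2$ transition matrix, note the second eigenvalue is $1-(\alpha+\beta)$, lower-bound $\alpha+\beta$ by $2/m$ (DD) or $1/m$ (JDD), and apply $1-x\le e^{-x}$. The only difference is cosmetic: you bound $\alpha+\beta$ using $\beta$ alone and so carry a $(1+O(1/m))$ correction to the constant, whereas the paper asserts $\gamma\ge 2/m$ directly (which uses the $\alpha=d_ud_v/2m^2$ term to absorb the $-1/m^2$ in $\beta$); you are also more explicit than the paper about why the coefficient of the decaying eigencomponent is at most $1$.
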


The distribution $\sta_{u,v}$ is the probability of existence/absence
of edge $(u, v)$ once $\cM$ has converged. This implies that after $N
= m\ln(1/\eps)$ we are very close to the converged distribution of
\emph{all} edges, irrespective of the degrees of their terminating
vertices.

\begin{proof}
$\M{T}$ has two eigenvalues, $1$ and $1 - (\alpha(d_u, d_v)
  +\beta(d_u, d_v))$. Let the corresponding unit eigenvectors be
  $\be_1$ and $\be_2$. For notational convenience we will refer to
  $\alpha(d_u, d_v)$ and $\beta(d_u, d_v)$ as $\alpha$ and
  $\beta$. Since $\alpha + \beta > 0$, the eigenvectors form a
  basis. We represent the initial state of the MC as $\bv = c_1 \be_1
  + c_2 \be_2$. After $N$ applications of the transition matrix, the
  distribution of the Markov chain's state is
\begin{eqnarray}
 \bp & = & \M{T}^N \bv = c_1 \M{T}^N \be_1 + c_2 \M{T}^N \be_2 \nonumber \\
     & = & c_1 \be_1 + c_2 \left(1 - (\alpha +\beta) \right)^N \be_2. 
\end{eqnarray}

Since $(1 - (\alpha + \beta)) < 1$, the second term decays with $N$,
and the stationary distribution asymptotically approaches $c_1
\be_1$. The 2-norm of the decaying term (the error incurred when we
stop at finite $N$) can be bounded for
$$N = \ln{(1/\epsilon)}/(\alpha+\beta)$$
in the following manner
\begin{eqnarray}
  \|\bp - \sta_{u,v}\|  & = &  \|(1-\gamma)^N c_2 \be_2\|_2 
           \leq  (1-\gamma)^{\ln(1/\eps)/\gamma}  c_2 \|\be_2\|_2  \nonumber \\
                        & \leq &  \exp(-\ln(1/\eps)) = \eps \nonumber
\end{eqnarray}
where we have used $\gamma = (\alpha + \beta)$ and 
\begin{eqnarray}
  N & = & \frac{m}{2} \ln{\frac{1}{\epsilon}} \geq \frac{1}{\gamma}  \ln{\frac{1}{\epsilon}},
  \mbox{\hspace{3mm} for DD-preserving MC and} \nonumber \\
  N & = & m \ln{\frac{1}{\epsilon}} \geq \frac{1}{\gamma} \ln{\frac{1}{\epsilon}},
  \mbox{\hspace{3mm} for JDD-preserving MC.} 
\label{eqn:N}
\end{eqnarray}
The key observation is $\gamma \geq 2/m$ for DD-preserving MC (see
values of $\alpha$ and $\beta$ in Sec.~\ref{sec:DD}) and $\gamma \geq
1/m$ for a JDD-preserving MC (see Sec.~\ref{sec:JDD}). 
\end{proof}
%\Ali{We may have to revise these results}.

%% ----------------------------------------------------------------------
%% EMPIRICAL TESTS
%% ----------------------------------------------------------------------
\section{Tests with real graphs}
\label{sec:tests}

In this section we perform  empirical studies to find a suitable value of $\epsilon$, i.e., one
which can be used to generate samples of graphs which are
statistically indistinguishable from those generated by a smaller
$\epsilon$. We discriminate graphs using certain graphical metrics. 

We run the MC with the DD-preserving rewiring scheme for $N$ steps
before saving the resultant graph. We start the MC using 4 real
graphs - the neural network of \emph{C. Elegans}~\cite{celegans1}
(referred to as ``C. Elegans''), the power grid of the Western states
of US~\cite{celegans1} (called ``Power''), co-authorship graph of
network science researchers~\cite{adjnoun} (referred to as
``Netscience'') and a 75,000 vertex graph of the social network at
Epinions.com~\cite{03ra3a} (``soc-Epinions1''). Their details are in
Table~\ref{tab:graphs}. The first three were obtained
from~\cite{newman_graphs} while the fourth was downloaded
from~\cite{snap}. All the graphs were converted to undirected graphs
by symmetrizing the edges.

We use $N = \{0.5, 2.5, 5, 7.5\}|E|$ corresponding to $\epsilon =
\{0.37, 6.7 \times 10^{-3}, 4.5 \times 10^{-5}, 3.06 \times
10^{-7}\}$. 1000 graphs are generated in this manner, starting with
the real graphs in Table~\ref{tab:graphs}. Thereafter, we calculate
the global clustering coefficient (three times the ratio of the number of
triangles to wedges in a graph)\cite{BaWe00},
%\Jai{Ali, can you put in a citation of a paper where global clustering coefficient is described)}, 
the diameter, and the maximum eigenvalue of the Laplacian matrix  of the graph~\cite{Me94} 
% (\Jai{Ali can you cite a place where the Laplacian of a graph is defined?})
 for each of members of the ensemble of 1000 graphs. In Fig.~\ref{fig:dd},
we plot these distributions for three of the four graphs in
Table~\ref{tab:graphs}. We clearly see that for $\epsilon = 0.37$
i.e., $N = 0.5|E|$, the distributions for all graphical metrics are
quite different from the distributions obtained with $N > |E|$. The
red and black curves, corresponding to $N = 5|E|$ and $N = 7.5|E|$ are
mostly indistinguishable indicating that the MC has converged to its
stationary distributions. We will henceforth proceed with $\epsilon =
4.5 \times 10^{-5}$, achieved with $N = 5|E|$ when the DD is preserved
during the MC sampling.
\begin{table}[b]
  \caption{Characteristics of the graphs used in this paper. $(|V|,
    |E|)$ are the numbers of vertices and edges in the graph and G-R
    diagnostic is the Gelman-Rubin diagnostic.}
  \label{tab:graphs}
  \begin{ruledtabular}
    \begin{tabular}{lcr} 
      \textrm{Graph name} & \textrm{ $(|V|, |E|)$} & \textrm{G-R diagnostic} \\ \colrule
      C. Elegans & (297, 4296)     &  1.05 \\ 
      Netscience & (1461, 5484)    &  1.02 \\ 
      Power      & (4941, 13188)   &  1.006 \\ 
      soc-Epinions1 & (75879, 405740)& 1.06  \\
    \end{tabular}
  \end{ruledtabular}
\end{table}

\begin{figure*}
  \centerline{\includegraphics[width=\textwidth, trim=1cm 7cm 2.cm  7cm,clip=true]
  {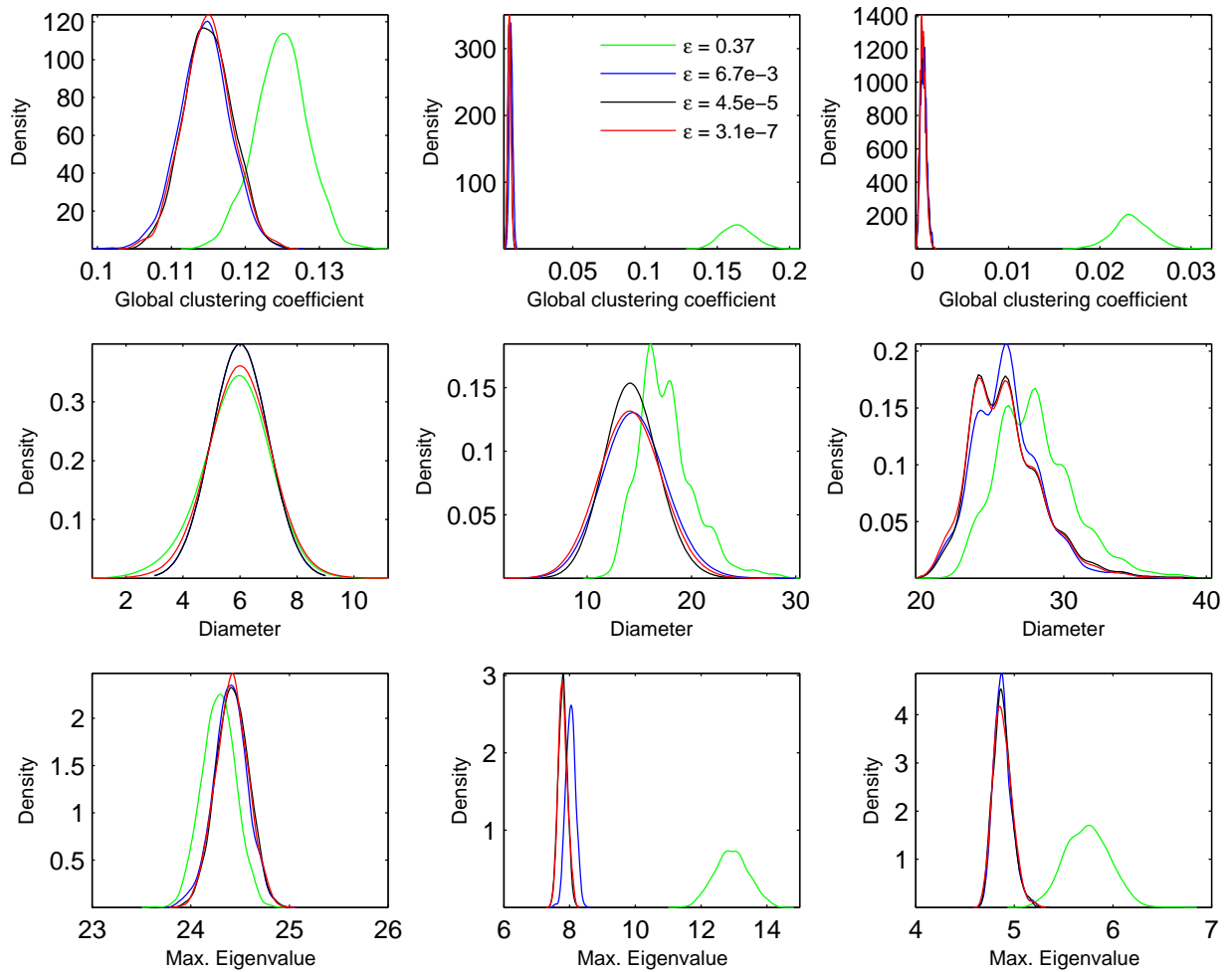}}
  \caption{Plots of the distributions of the global clustering
    coefficient, the diameter, and the maximum eigenvalue of the
    graph Laplacian for ``C. Elegans'' (left), ``Netscience'' (middle)
    and ``Power'' (right), evaluated after $0.5|E|, 2.5|E|, 5|E|$ and
    $7.5|E|$ iterations of the Markov chain (green, blue, black and
    red lines respectively). The corresponding values of $\epsilon$
    are in the legend. We see that the distributions converge at
    $\epsilon \sim 1e-5$. In these runs, the DD was held constant
    across all graphical samples.}
  \label{fig:dd}
\end{figure*}

We now turn our attention to the case when the JDD is preserved during
MC sampling. We repeat the same sampling procedure as above, but with
the JDD-preserving rewiring scheme being driven by the Markov
chain. We choose the same sequence of $\epsilon$, corresponding to $N
= \{1, 5, 10, 15\}|E|$, per Eq.~\ref{eqn:N}. The distributions of
global clustering coefficients, graph diameter and maximum eigenvalue
of the graph Laplacian are in Fig.~\ref{fig:jdd}. Again, it is
 clear that the distribution (for any of the graphical
metrics) is far from convergence for $\epsilon = 0.37$ but
indistinguishable for the other values of $\epsilon$. In particular,
the red and black plots lie on top of each other in most of the
subfigures, indicating that $N = 10|E|$ ($\epsilon = 4.5 \times
10^{-5}$) is generally sufficient to drive the MC to its stationary
distribution, as characterized by the aforementioned graphical
parameters.
\begin{figure*}
  \centerline{\includegraphics[width=\textwidth, trim=1cm 7cm 2.cm  7cm,clip=true]
  {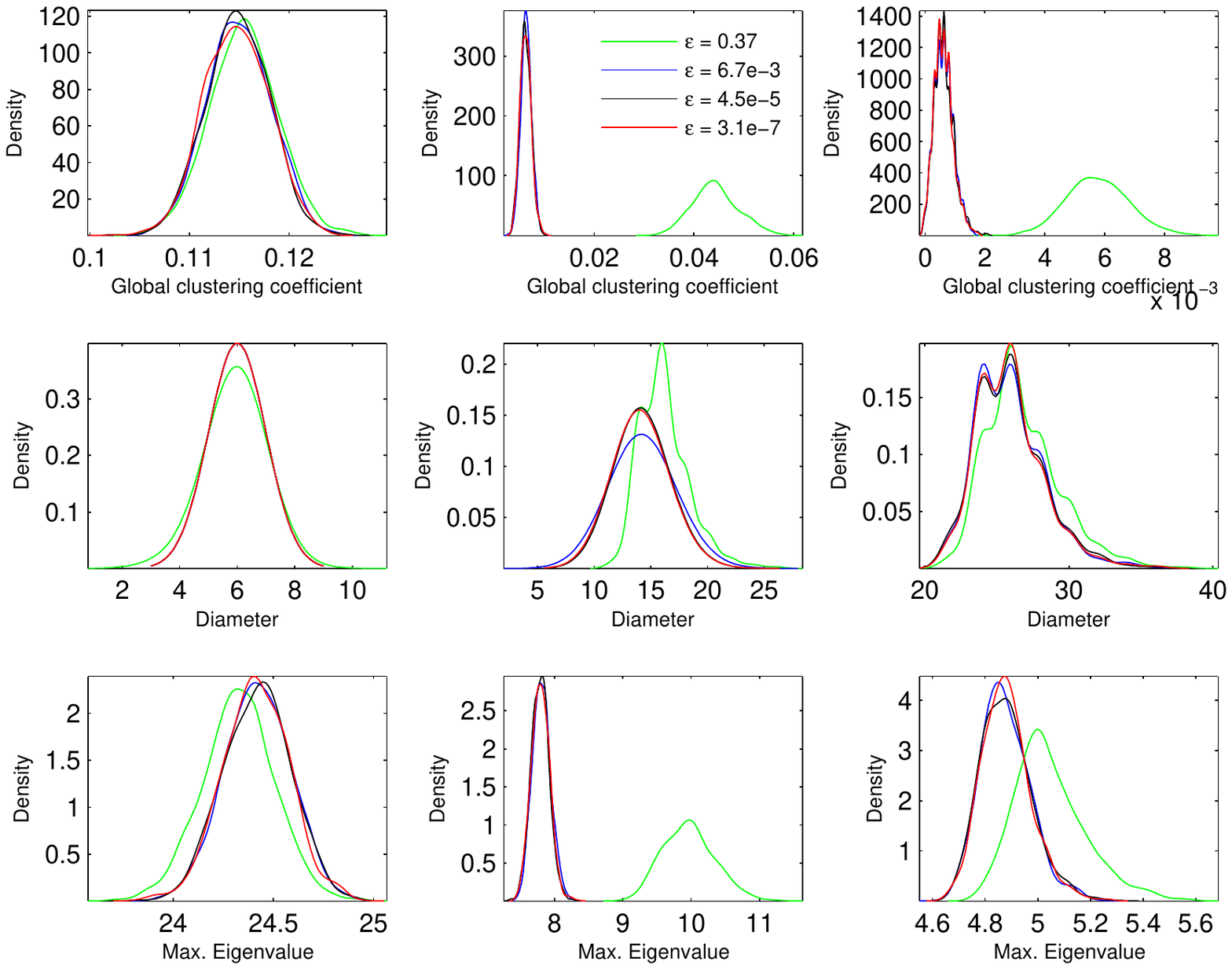}}
  \caption{Plots of the distributions of the global clustering
    coefficient, the diameter, and the maximum eigenvalue of the
    graph Laplacian for ``C. Elegans'' (left), ``Netscience'' (middle)
    and ``Power'' (right), evaluated after $|E|, 5|E|, 10|E|$ and
    $15|E|$ iterations of the Markov chain (green, blue, black and
    red lines respectively). The corresponding values of $\epsilon$
    are in the legend. We see that the distributions converge at
    $\epsilon \sim 1e-5$. In these runs, the JDD was held constant
    across all graphical samples.}
  \label{fig:jdd}
\end{figure*}

In summary, a tolerance of $\epsilon = 4.5 \times 10^{-5}$ provides
converged/stationary distributions of global clustering coefficient,
graph diameter and maximum eigenvalue of the graph Laplacian. This
corresponds to running the MC on the space of graphs for $N = 5|E|$
steps when DD is preserved and for $N = 10|E|$ steps when JDD is
preserved. These results provide an empirical proof that the models in
Sec.~\ref{sec:DD} and Sec.~\ref{sec:JDD} provide a good \emph{a
priori} estimate of the steps to run a MC to obtain stationary
results.

%% ----------------------------------------------------------------------
%% Verification of the model assumptions & G-R test.
%% ----------------------------------------------------------------------
%
\section{Verification of model assumptions}
\label{sec:verif}

In this section, we check our assumption that a MC on the space of
graphs can be approximated by coupled 2-state Markov chains. We also
devise a test to ensure that the MC does not converge to a local mode
in the distribution of graphs. This can occur in case of multimodal
distributions.

\subsection{Verifying edge-by-edge convergence}
\label{sec:ebe}

The equation for $N$ (Eq.~\ref{eqn:N}) is based on a heuristic and has
to be verified. Further, the equation is strictly valid for an edge
and it is unlikely that after $N$ steps of the MC, \emph{all} the edges
have decorrelated and converged to their stationary distribution. The
residual number of correlated edges have to be identified and their
impact on graphical properties quantified.

Consider the behavior of an arbitrary edge (between labeled vertices)
during a long walk executed by a MC on the space of graphs. During
this walk, the edge appears and disappears, thus describing a binary
time-series $\{Z_t\}$. If the MC on the space of graphs reaches a
stationary distribution in ${\rm O(|E|)}$ steps, then thinning the
binary time-series by a factor of ${\rm O(|E|)}$ should result in a
time-series that resembles independent draws of zeros and ones. We
develop a nonparametric test of independence for such time-series.
This method exploits Sokal's observation that if an MC reaches a
stationary distribution, the samples it draws are uncorrelated i.e.,
the autocorrelation is small~\cite{Sokal}. It is general and does not
depend on the preservation of DD or JDD in the graphs. While this
test has been in existence for some time~\cite{92rl2a,96ra1a}, it is
not well known in the network generation literature.

Assume that the time-series $\{Z_t\}$ is very long i.e., it takes $K
\gg N$ steps. The time-series will be correlated as observed
in~\cite{StPi12}. Consider too, a $k$-thinned chain $\{Z^k_t\}$,
obtained by retaining every $k^{th}$ member of $\{Z_t\}$.  The thinned
chain will have a far smaller autocorrelation and upon further
thinning will begin to resemble independent draws from a
distribution. If Eq.~\ref{eqn:N} is correct, $k = N$ should yield a
time-series that resembles independent draws \emph{more} than a
first-order Markov process. This forms the basis of our test of
independence. We fit models of two processes, independence and
first-order Markov, to the data and compute the
log-likelihood. The ``goodness'' of model fit is determined by computing
the Bayesian Information Criterion (BIC). This paper is the first
application of this technique to graphs, though it has been used
in other contexts.

Let $x_{ij}$ be the number of $(i, j), i, j \in (0, 1)$ transitions in
the $k-$thinned chain $\{Z^k_t\}$. The $x_{ij}$ are used to populate
$X$, a $2 \times 2$ contingency table. The table entries are
normalized by the length of the $k$-thinned chain ($K/k -1$) to
provide us with the empirical probabilities $p_{ij}$ of an $(i,j)$
transition in $\{Z^k_t\}$. Let $\widehat{p_{ij}}$ and
$\widehat{x_{ij}} = (K/k -1) \widehat{p_{ij}}$ be the predictions
of expected values of the table entries provided by a model. Then the
goodness-of-fit of the model can be provided by a likelihood ratio
statistic (called the $G^2$-statistic; Chapter 4.2 in~\cite{07bfh3a})
and a Bayesian Information Criterion (BIC) score:
\begin{eqnarray}
  G^2 & = & -2 \sum_{i=0}^{i=1} \sum_{i=0}^{i=1} x_{ij} \log\left( \frac{\widehat{x_{ij}}}{x_{ij}}\right), 
  \nonumber \\
  BIC & = & G^2 + q \log\left( \frac{K}{k} - 1\right),
\label{eqn:ll}
\end{eqnarray}
where $q$ is the number of parameters in the model used to fit the
table data. Log-linear models are generally used to model tabular data
(Chapter 2.2.3 in~\cite{07bfh3a}). The log-linear models' predictions
for table entries generated by independent sampling and a first-order
Markov process are
\begin{eqnarray}
\log(p_{ij}^{(I)}) & = & u^{(I)} + u^{(I)}_{1,(i)} + u^{(I)}_{2, (j)} 
 \mbox{\hspace{2mm} and \hspace{2mm}}  \nonumber \\
\log(p_{ij}^{(M)}) & = & u^{(M)} + u^{(M)}_{1,(i)} + u^{(M)}_{2, (j)} + u^{(M)}_{12,(ij)},
\label{eqn:llm}
\end{eqnarray}
where superscripts $I, M$ indicate an independent and a Markov process
respectively. The maximum likelihood estimates (MLE) of the parameters
($u^{(W)}_{b,(c)}$) are available in closed form in literature (Chapters 2.2.3 and
3.1.2 in~\cite{07bfh3a}) and we reproduce them below:
\begin{equation}
\widehat{x_{ij}^{I}}  =  \frac{(x_{i+}) (x_{+j})}{x_{++}} 
\mbox{\hspace{2mm} and \hspace{2mm}}
\widehat{x_{ij}^{M}}  =  x_{ij},
\label{eqn:llmans}
\end{equation}
where $x_{i+}$ and $x_{+j}$ are the sums of the table entries in row
$i$ and column $j$ respectively. $x_{++}$ is the sum of all entries
(i.e., $K/k - 1$, the number of transitions observed in $\{Z^k_t\}$,
or the total number of data points). We compare the fits of the two
models thus: 
\begin{eqnarray}
  \Delta BIC & =  &  BIC^{(I)} - BIC^{(M)} \nonumber \\
             & =  & -2 \sum_{i=0}^{i=1} \sum_{i=0}^{i=1} x_{ij}
               \log\left( \frac{\widehat{x_{ij}^{(I)}}}{x_{ij}}\right)
               - \log\left(\frac{K}{k} - 1\right). \nonumber \\
 \label{eqn:bic}
\end{eqnarray}
Above, we have substituted $\widehat{x_{ij}^{(M)}} = x_{ij}$ and the
fact that the log-linear model for a Markov process has one more
parameter than the independent sampler model.  Large BIC values
indicate a bad fit. A negative $\Delta BIC$ indicates that an
independent model fits better than a Markov model.

This test is applied as follows. We construct a thinned binary
time-series $\{Z^k_t\}$ for $k = |E|$ for each of the edges. The
$\Delta BIC$ is computed and edges with negative $\Delta BIC$ are
deemed to have become independent after $k$ steps of the Markov
chain. The test is repeated with a higher $k$ till all the edges are
deemed to have become independent. A C++ implementation of this test,
using a binary time-series as input, is available
at~\cite{papersoftware}.

However, a difficulty appears at this point. We see that the accuracy
of the log-linear models in Eq.~\ref{eqn:llm} and~\ref{eqn:llmans}
depend on the length of the chain being tested i.e., if the chain is
too short, the ratio of likelihoods and the BIC will be wrongly
calculated. Thus, before we populate the contingency table, we ensure
that the edge time-series is long enough. 

Consider an edge time-series obtained from the $K$-step MC. One can
empirically compute the edge-mean $\overline{Z_t}$ i.e., the
probability of existence of the edge. An estimate of the same quantity can be
obtained from a $k-$thinned version of the chain i.e.,
$\overline{Z^k_t}$. If the thinned time-series are sufficiently long,
then the estimates of the edge mean, $\overline{Z^k_t}$, computed for
different values of $k$, approximately describe a normal distribution,
with mean $q$ (the edge mean) and a variance $\sigma^2$. 

We wish to ensure that a $k$-thinned MC is sufficiently long, i.e.,
the empirical edge-mean $\overline{Z^k_t}$ can be estimated to lie
inside $q \pm r$ with confidence $s$ i.e.$ P( q - r \le
\overline{Z^k_t} < q + r ) = s$, or
\begin{equation}
 \left( \frac{r}{\Phi^{-1}\{ 0.5(1+s) \}} \right)^2 = \sigma^2 
\label{eqn:sigma}
\end{equation}
where $\Phi$ is the cumulative distribution function for a standard
normal distribution and $r$ is a tolerance on the empirical edge-mean
calculated using a $k$-thinned MC.

Consider that thinning the $K$-step time-series by a factor $k^\prime$
renders it a first-order MC (see Appendix~\ref{ap:som} on how this may
be done). The contingency table entries (Eq.~\ref{eqn:llmans}) then
indicate the number (or proportions) of 0-1 and 1-0 transitions of the
2-state Markov chain model of the time-series. It is then trivial to
compute the entries $\alpha$ and $\beta$ of the Markov transition
matrix. If $n^\prime$ is the length of the $k^\prime$-thinned time-series, then
the variance of the $\overline{Z^k_t}$, $\sigma^2$ can be written in
terms of $\alpha, \beta$,
\[ \sigma^2 = \frac{\alpha \beta (2 - \alpha - \beta)}{n^\prime (\alpha +
  \beta)^3} \]
and so, using Eq.~\ref{eqn:sigma},
\begin{equation}
  n^\prime = \frac{\frac{\alpha \beta (2 - \alpha - \beta)}{(\alpha + \beta)^3}}
           {\left( \frac{r}{\Phi^{-1}\{ 0.5(1+s) \}} \right)^2},
\end{equation}
i.e., the $k^\prime$-thinned time-series must be at least $n^\prime$ in
length to provide an estimate $\overline{Z^k_t}$ of $q$ (computed from
the $K$-step time-series) within tolerance $r$ with the required
confidence $s$. In this paper, we use $r = 0.01$ and $s = 0.95$.

We now apply this test to a Markov chain on graphs initiated with
those mentioned in Table~\ref{tab:graphs}. In Fig.~\ref{fig:ind} we
see the thinning factors required to render the $k$-thinned
time-series $\{Z^k_t\}$ resemble independent draws from a
distribution. In Fig.~\ref{fig:ind} (left) we see that thinning by
$4|E|$ is sufficient to render $\{Z^k_t\}$ independent, for all three
graphs tested, when the DD is preserved by the MC. In
Fig.~\ref{fig:ind} (right) we see results for the JDD-preserving
counterpart. It is clear that higher thinning factors are required for
independence, but a thinning factor of $10|E|$ seems to be
sufficient. Thus empirically, the factor-of-two difference in $N$, as
predicted by the models in Eq.~\ref{eqn:N}, and as seen empirically in
Fig.~\ref{fig:dd} and Fig.~\ref{fig:jdd}, are corroborated by a
separate test that does not assume DD or JDD preservation.
\begin{figure*}[t]
  \centerline{
    \includegraphics[width=0.5\textwidth, trim=1cm 6cm 2.cm 6cm,clip=true] {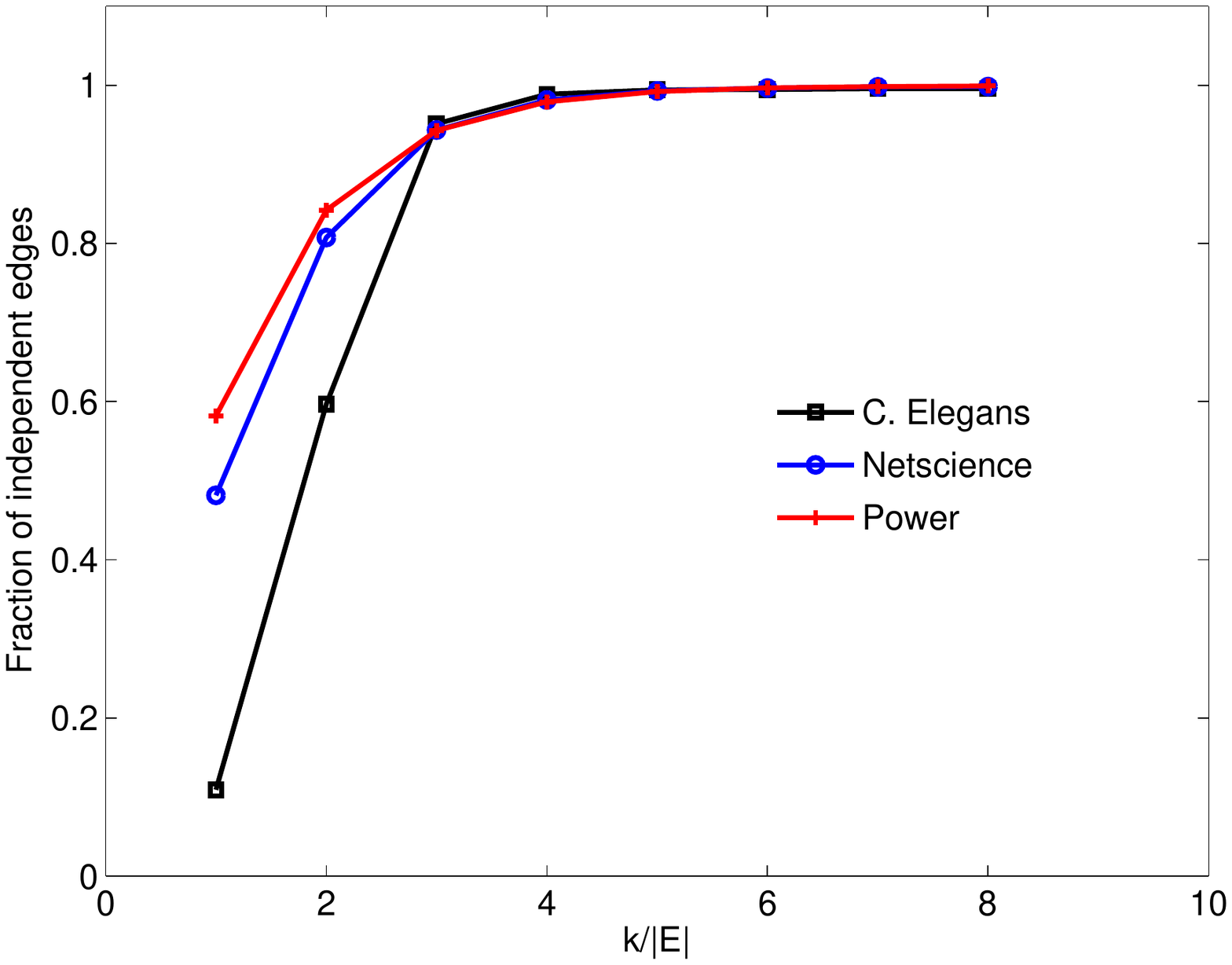}
    \includegraphics[width=0.5\textwidth, trim=1cm 6cm 2.cm 6cm,clip=true] {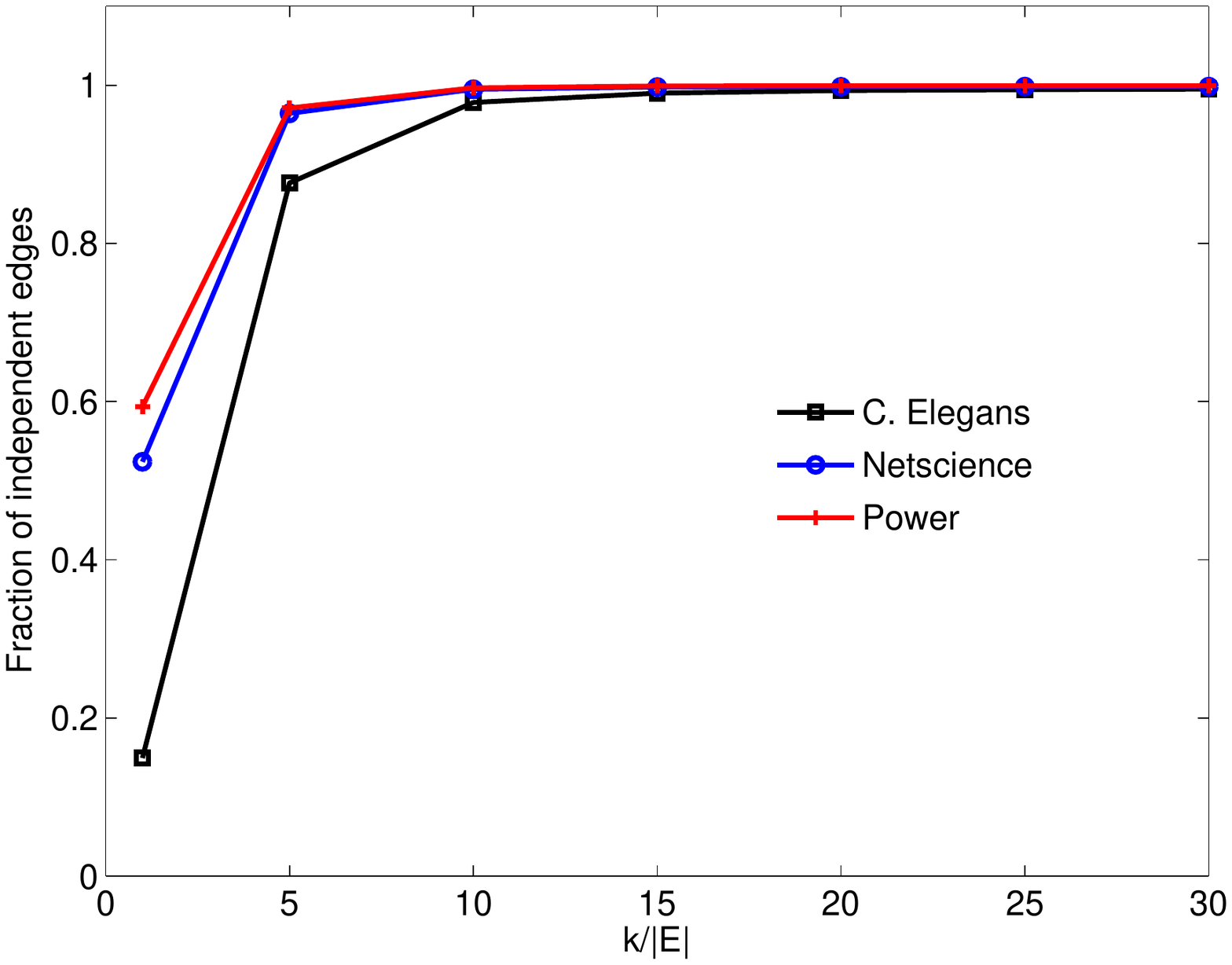}}
  \caption{Fraction of edges testing independent, for ``C. Elegans'',
  ``Netscience'' and ``Power'' for various values of the thinning
  factor $k$. Left: Results for the case when DD is preserved. We see
  that with a thinning factor of $k = 4|E|$ all the edges have
  converged to their stationary distribution. Right: Results for the
  case when JDD was preserved. We see that $k = 10|E|$ ensures that at
  least 95\% of the edges become independent.}
  \label{fig:ind}
\end{figure*}

Convergence of distributions of global clustering coefficients,
diameter and maximum eigenvalues do not automatically indicate that
the time-series of \emph{all} the edges in a graph resemble
independent draws. This is clearly seen in Fig.~\ref{fig:jdd} for the
graph ``Power'' (bottom row). In Fig.~\ref{fig:jdd}, $N \sim 10|E|$
($\epsilon = 4.5\times10^{-5}$) was sufficient to construct
distributions that do not change appreciably for greater $N$. However,
from Fig.~\ref{fig:ind}, we see that about 2\% of the edges are still
not independent, but presumably close to being so, since about $15|E|$
steps of the MC result in all edges becoming independent. 

%This raises
%the question of ``hard-to-converge,'' edges i.e., are the metrics
%chosen to indicate convergence sensitive to a handful of edges which
%do not become independent unless $N$ is intractably large? Such
%``hard-to-converge'' edges were not encountered in the graphs tested
%above.
%
%\Ali{ Do we have any idea about what kind of edge were hard to converge?} 

In order to check this issue, we performed a test with the
soc-Epinions1 graph. We generate an ensemble of 1000 graphs using $N =
30|E|$ and the method in Sec.~\ref{sec:JDD} i.e., where the JDD of the
initial graph was preserved as samples were generated. We also run a
long $K$-step MC ($K = 21.6\times10^6|E|$) and compute the thinning
factor $k$ required to render each edge's time-series similar to
independent draws. Since the graph has 405,740 edges, the independence
test was performed for only 10\% of the edges, chosen randomly from
the initial graph. In Fig.~\ref{fig:soc_ind} we plot a histogram of
$k/|E|$. We clearly see that while by $k = 30|E|$ about 90\% of the
edges have become independent, there are quite a few ``pathological''
edges that are very far from becoming independent (one such edge
becomes independent at $k = 720|E|$). 
%% Yet, the distributions of the
%% maximum eigenvalue of the graph Laplacian and diameter do not seem to
%% be very affected by these edges. 
In Fig.~\ref{fig:soc} we plot the
distributions for the diameter and the maximum eigenvalue for $N =
30|E|, 150|E|, 270|E|$ and $390|E|$. We see that the two
distributions, while not identical, do not change much. Thus
distributions of the chosen metrics are robust (insensitive) to an
ensemble which might contain graphs with a few ``hard-to-converge''
edges. Thus, if one is interested in generating proxy graphs where
anonymity is critical (e.g. proxies of email traffic in an
organization), an exhaustive, edge-by-edge checking of independence,
using the method described above, may be necessary. On the other hand,
when one only desires a collection of roughly independent graphs with
a prescribed DD or JDD, the far simpler approach of Sec.~\ref{sec:DD}
or~\ref{sec:JDD} may suffice.

\begin{figure}
  \includegraphics[width=0.5\textwidth, trim= 1cm 6cm 1cm 6cm, clip=true]{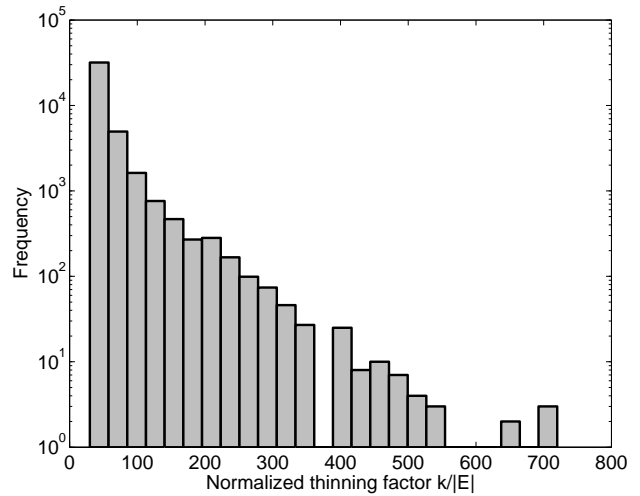}
  \caption{A histogram of $k/|E|$, where $k$ is the number of steps
    the MC has to take to render the time-series of a particular edge
    resemble independent draws, calculated for the soc-Epinions1
    graph. We see that while $30|E|$ steps render about 90\% of the
    edges independent, there are a few edges which are still very far
    from becoming independent.These results are for a case where the
    JDD was preserved.}
\label{fig:soc_ind}
\end{figure}

\begin{figure*}
  \centerline{
    \includegraphics[width=0.5\textwidth, trim= 1cm 6cm 1cm 6cm, clip=true]{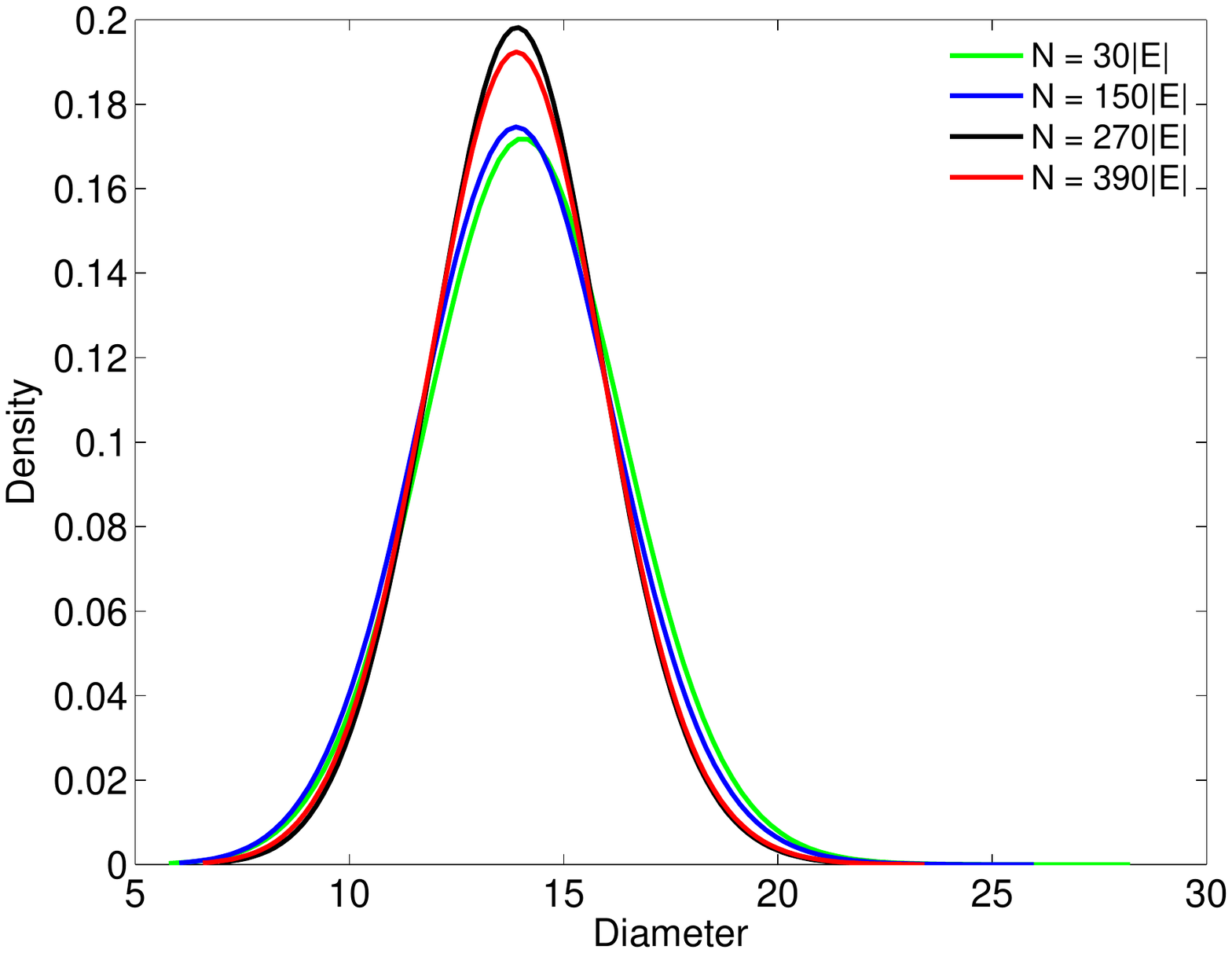}
    \includegraphics[width=0.5\textwidth, trim= 1cm 6cm 1cm 6cm, clip=true]{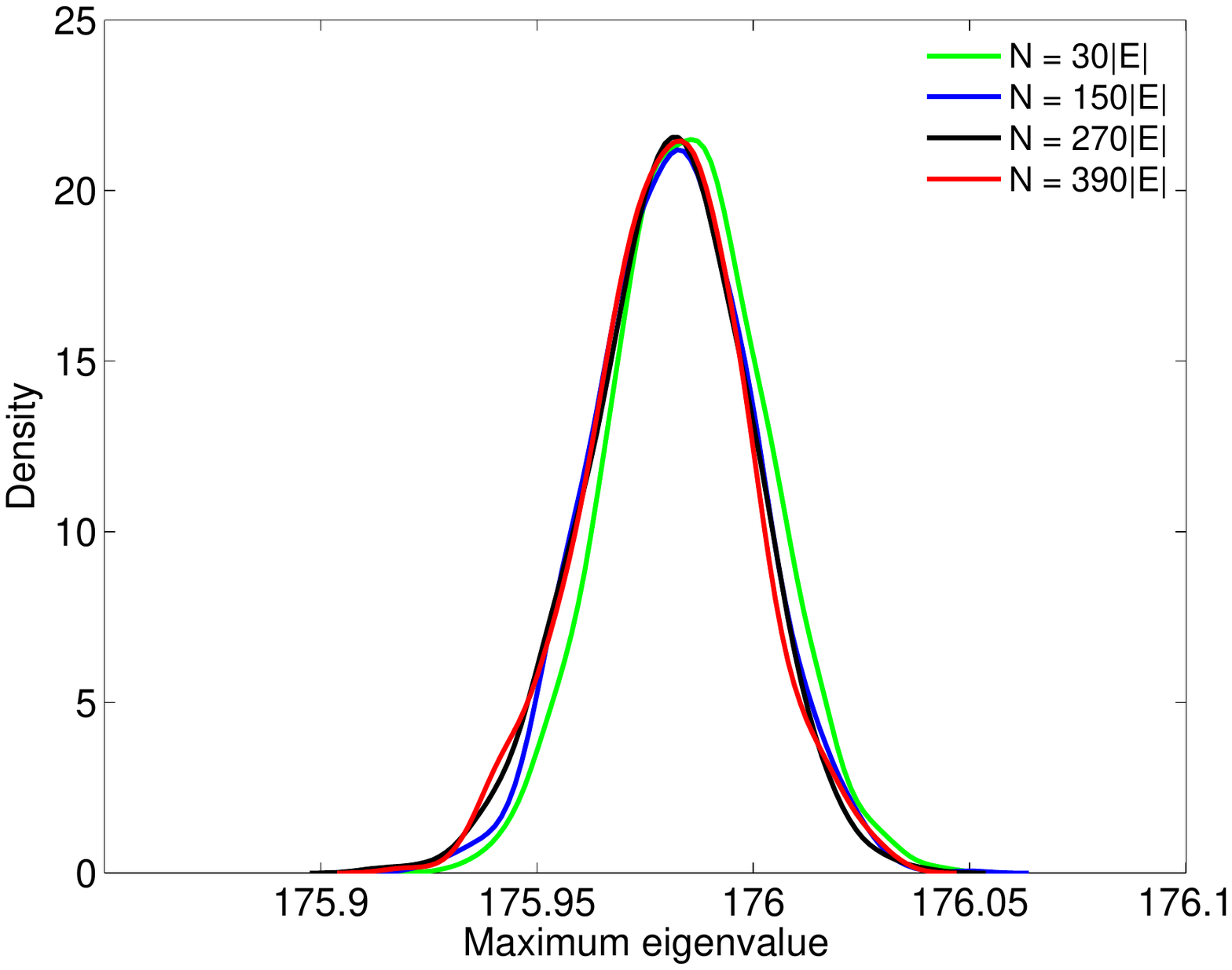}}
  \caption{Plots of the distribution of graph diameter (left) and
  maximum eigenvalue of the graph's Laplacian (right) for $N = 30|E|,
  150|E|, 270|E|$ and $390|E|$, for the soc-Epinions1 graph (green,
  blue, black and red lines respectively). We see that distributions
  are very similar, even though, from Fig.~\ref{fig:soc_ind}, it is
  evident that a few edges are far from being independent. Thus these
  two graphical metrics are not very sensitive graphs that are almost,
  but not quite, independent. These results are for a case where the
  JDD was preserved.}
  \label{fig:soc}
\end{figure*}
\subsection{The Gelman-Rubin test for convergence of Markov chains}

Finally we address the question whether the MC generating the ensemble
of graphs is sampling from a particular mode of the distribution of
graphs (in case the distribution is multimodal). This can, in
principle, occur since the MC is always started from the same starting
graph. Convergence to the stationary distribution (and not just to one
mode) can be tested by starting the MC from an overdispersed set of
points and checking whether the same distributions are realized,
irrespective of the starting point. In our case, we will choose three
points and perform this test only for the JDD-preserving case.

In order to generate a set of overdispersed starting graphs, we run a
Markov chain for $N = 10,000|E|$ steps, after initializing it with a
real network. This serves as the second over-dispersed starting point
for the MC and the initial condition for the third point (realized
after another 10,000$|E|$ steps). Three concurrent MC are initialized
with these graphs, and we calculate the Gelman-Rubin (G-R)
diagnostic~\cite{92gr1a} using the binary edge time-series. Values of
the diagnostic between 1 and 1.1 indicate that the states of the
concurrent Markov chain are not dependent on the starting location. We
performed this test for all 4 graphs; the corresponding G-R
diagnostics are tabulated in Table~\ref{tab:graphs}.

%% ----------------------------------------------------------------------
%% CONCLUSIONS
%% ----------------------------------------------------------------------
\section{Conclusions}
\label{sec:concl}

We have developed a method for generating independent realizations of
graphs which share a particular graphical property. In this paper, we
have demonstrated the method in the case where (1) the degree
distribution was preserved over all realizations and (2) where the
joint degree distribution was held constant. The graphs are generated
using a MC approach, which drives a graph ``rewiring'' scheme. The
rewiring scheme is responsible for preserving the shared graphical
properties when it generates a new graphical realization, from an old
one. Our method involves running the Markov chain for $N$ steps before
extracting a graph; the chain is run repeatedly to generate an
ensemble of realizations, given an initial graph. We have developed
models and closed form expressions for $N$ that allows a coupled
2-state Markov chain \emph{of an edge} to converge to its stationary
distribution. This is a necessary, but not sufficient, condition for
the Markov chain \emph{on the space of graphs} to ``forget'' the
starting graph and generate an independent realization. We find that a
Markov chain requires $5|E| - 30|E|$ steps to mix fully, i.e., to
provide converged distributions of graphical properties like the global
clustering coefficient, graph diameter and maximum eigenvalues of the
graph Laplacian. We find that the Markov chains that preserve the
degree distribution are easier to mix than those where the joint
degree distribution is held constant.

We verified our model (for $N$) by performing a test for the
independence of the Markov chain described by any edge between two
labeled vertices in the graph, as the Markov chain on the space of
graphs executes its random walk. This test is not dependent on any
heuristic or graphical properties. The test uses the time-series of
occurrence/non-occurrence of an edge between labeled vertices, thins
them by a factor of $N$, and fits it with a first-order Markov and an
independent sampling model. The goodness of fit of the two models,
determined by their individual BICs, is used to select between
them. The Markov chain is considered mixed for any $N$ for which the
independence sampling model is selected over the Markov model. We find
that even for $N$ that provide converged distributions of the
graphical properties above, a small fraction of the edges may still be
significantly correlated (as determined by the independence test). The
robustness of the graphical properties indicate that if fully
independent graphical realizations are desired, the laborious process
of testing each edge may be unavoidable.

Finally, we repeated our tests by starting parallel Markov chains from
widely dispersed starting points and computing the Gelman-Rubin
diagnostic for them. We find that the parallel chains converge to the
same stationary distribution. This check was performed to ensure that
our distributions of graphical properties was not being driven by the
starting point of the Markov chain.

Apart from the theoretical results summarized above, we also provide
an open-source C++ implementation of the non-parametric test, along
with sample problems to illustrate its use~\cite{papersoftware}.

%% ----------------------------------------------------------------------
%% Acknowledgements
%% ----------------------------------------------------------------------
\section*{Acknowledgments}
This work was funded by the Applied Mathematics program of the United States Department of Energy, Office
of Science and by an Early Career Award from the Laboratory Directed
Research \& Development (LDRD) program at Sandia National
Laboratories. Sandia National Laboratories is a multiprogram
laboratory managed and operated by Sandia Corporation, a wholly owned
subsidiary of Lockheed Martin Corporation, for the United States
Department of Energy's National Nuclear Security Administration under
contract DE-AC04-94AL85000.

%% ----------------------------------------------------------------------
%% Appendix
%% ----------------------------------------------------------------------
\appendix

\section{}
\label{ap:som}

Here, we address how one may thin a long, $K$-step time-series so that
the thinned chain resembles a first-order Markov process. It follows
the arguments in Sec.~\ref{sec:ebe}. We consider a binary time-series
and populate a $2 \times 2 \times 2$ contingency table of transitions
that one might observe in a sequence of length three. The contingency
table is fitted with log-linear models for second- and first-order
Markov models (see Sec. 7.3.1 in~\cite{07bfh3a} for second-order
Markov models), and the $G^2$ statistic calculated along with the
difference in the BICs arising from second-order and first-order
Markov model fits.

Initially, the high-order Markov model fits to data better than the
first-order model. However, as the $K$-step time-series is thinned,
and the correlation decays, the fit of first-order model,
vis-\'{a}-vis, the second-order model improves. At a particular
thinning factor $k^\prime$ the first-order model fits better than the
second order one (the difference in BICs favors the first-order model)
and we obtain a first-order Markov chain.

In the software that we have released with this
paper~\cite{papersoftware}, the C function \texttt{mctest()} checks
the fit of a time-series to first- and second-order Markov chains,
returning the ratio of likelihoods and the difference in the BICs of
the two models. The C function \texttt{indtest()} does the same, but
between  a first-order Markov model and independent draws from a
binary distribution. Both the functions construct the contigency table
from the binary time-series supplied via the function arguments, and
compute the expected values of the table entries (using log-linear
models calibrated to table data). In contrast, the function
\texttt{mcest()} estimates the transition probabilities $\alpha$ and
$\beta$ from a binary time-series, being modeled as a first-order
Markov process. Examples of the use of the tests (\texttt{mctest} and
\texttt{indtest}) are illustrated in the software package associated
with this paper (in \texttt{ex01/}). The use of all the functions, in
a MC on graphs, is in \texttt{ex02/}.

%% ----------------------------------------------------------------------
%% References, etc.
%% ----------------------------------------------------------------------

\end{document}